\newtheorem{Theorem}{Theorem}
\newtheorem{Lemma}{Lemma}
\newcommand{\x}{\mathbf{x}}
\title{Finding Mixed Strategy Nash Equilibrium for Continuous Games through Deep Learning}
\author{Zehao Dou$^1$, Xiang Yan$^2$, Dongge Wang$^1$ and Xiaotie Deng$^1$\\
{$^1$ Peking University, \texttt{\{zehaodou, dgwang96, xiaotie\}@pku.edu.cn}}\\
{$^2$ Shanghai Jiao Tong University, \texttt{xyansjtu@163.com}}
}
\begin{document}
\maketitle
\allowdisplaybreaks[4]
\setlength{\abovedisplayskip}{1.1pt}
\setlength{\belowdisplayskip}{1.2pt}
\setlength{\jot}{1.0pt}
\setlength{\floatsep}{2.0ex}
\setlength{\textfloatsep}{2.0ex}

\begin{abstract}
Nash equilibrium has long been a desired solution concept in multi-player games, especially for those on continuous strategy spaces, which have attracted
a rapidly growing amount of interests due to advances in research applications such as the generative adversarial networks.
Despite the fact that several deep learning based approaches are designed to obtain pure strategy Nash equilibrium, it is rather luxurious to assume the existence of such an equilibrium.
In this paper, we present a new method to approximate mixed strategy Nash equilibria in multi-player continuous games, which always exist and include the pure ones as a special case. We remedy the pure strategy weakness 
by adopting the pushforward measure technique to represent a mixed strategy in continuous spaces. That allows us to generalize the Gradient-based 
Nikaido-Isoda (GNI) function to measure the distance between the players' joint strategy profile and a Nash equilibrium.
Applying the gradient descent algorithm, our approach is shown to converge to a stationary Nash equilibrium under the convexity assumption on payoff functions, the same popular setting as in previous studies. 
In numerical experiments, our method consistently and  significantly outperforms recent works on approximating Nash equilibrium for quadratic games, general blotto games, and GAMUT games.
\vspace{-0.05in}
\end{abstract}
\section{Introduction}

Nash equilibrium \citep{nash1950equilibrium} is one of the most important solution concepts in game scenario with multiple rational participants.
It plays an important role in theoretical analysis of games to guide rational decision-making processes in multi-agent systems. 
With the recent success of machine learning applications in games, it attracts even more research interests on applying machine learning technique for unsolved game theory problems, for example, computation of Nash equilibrium for multi-player games.
In this paper, we focus on games with continuous action spaces, which include the famous application for Generative Adversarial Networks (GANs) \citep{goodfellow2014generative}, as well as many important game types such as the colonel blotto game \citep{gross1950continuous}, Cournot competition \citep{varian1996intermediate}. We develop a solution significantly improves the status-quo. 

There have been several successful approaches to compute Nash equilibrium for multi-player (mostly 2-player) continuous game \citep{raghunathan2019game,balduzzi2018mechanics}.
These works seek Nash equilibria corresponding to pure strategies, in which each player takes a specific action to achieve its best payoff given other players' actions.
A major concern for such a solution concept is its possible non-existence.
As a result, the convergences to a Nash equilibrium for these approaches were proven under the assumption for the existence of a pure strategy Nash equilibrium, which can hardly be checked in practice, and their applicability is limited to specific types of games.
On the contrary, it is known that mixed strategy Nash equilibria always exist under mild conditions.
And note that any pure strategy Nash equilibrium is also 
a mixed strategy Nash equilibrium, which means the latter one is a much more desired solution concept.

However, a key challenge that obstructs the study of computing a mixed strategy Nash equilibrium, especially for a continuous game, lies on how to design an efficient method to represent the mixed strategy.
To be precise, a pure strategy can be represented by a single variable choosing from some region.
But as a distribution on each player's action space, a mixed strategy with respect to the player is defined in a (subspace of) real space $\mathbb{R}$.
More generally, 
exact representation for a mixed strategy of a player 
usually requires many variables in a continuous space.
In addition, the corresponding probability distribution may not have a density function in closed-form.

To address this challenge, we introduce a pushforward measure technique.
It is a common tool in measure theory to transfer a measure to some specific measure space \citep{bogachev2007measure}.
Specific to a continuous game, the probability distribution corresponding to a mixed strategy is obtained 
via a mapping parameterized by neural nets from a multi-dimensional uniform distribution.

With this pushforward representation, we generalize the Gradient-based Nikaido-Isoda (GNI) function, defined in \citep{raghunathan2019game}, to handle mixed strategy Nash equilibria.
The original GNI function can be viewed as a measure for the distance between any joint strategy profile and a Nash equilibrium after applying the payoff functions of players.
With proper generalization and modification, we develop its mixed strategy version as a proper measure for a Nash equilibrium.
We prove that the distance becomes zero if and only if a stationary mixed Nash equilibrium is obtained.
Then we apply the gradient descent algorithm to the general GNI function, which converges to a stationary mixed Nash equilibrium under the convexity assumptions on the payoff functions. 

Finally, we compare our method with baseline algorithms in numerical experiments.
Our approach shows effective convergence property in all the randomly generated quadratic games, general blotto games and GAMUT games, which outperforms other baselines.

\section{Background and Problem Description}

The discrete action space Nash equilibrium computation has been most widely studied in the literatures. Most well-known being the Lemke–Howson algorithm~\cite{lemke1964equilibrium}
for solving the bimatrix game. The state-of-art work in theoretical computer science of Tsaknakis and Spirakis provided a solution of $1/3$
approximation in polynomial time~\cite{tsaknakis2007optimization}. 
Surprisingly, an empirical work \cite{fearnley2015empirical} shows it performs well against practical game solving methods for the bimatrix game. 

However, continuous action space game computation is widely used in practice. 
But few methods are known for the general Nash equilibrium computation. 
Several recent effort to develop computational method of Nash equilibrium for multi-player (mostly 2-player) continuous game \citep{raghunathan2019game,balduzzi2018mechanics} have been restricted to pure strategies.

Game-theoretical approach has had useful applications 
to machine learning such as 
the optimization of GAN network training~\citep{daskalakis2017training,gidel2018variational} 
and adjustment on the gradient descent method~\citep{balduzzi2018mechanics}.
However they are limited to pure strategy Nash equilibrium.

We are the first work to study the mixed strategy continuous game Nash equilibrium computation.
Our work is motivated by the utilization of the Nikaido-Isoda (NI) function 
for loss function minimization~\citep {uryas1994relaxation,raghunathan2019game}. We start to establish a theoretical formulation of the extend mixed strategy continuous action space Nash equilibrium as a result of the minimization on a functional variation-based Nikaido-Isoda function.

\subsection{Continuous Game Nash Equilibrium}

\begin{equation}
\label{eqn:ne}
\begin{aligned}
&\text{Find}~~\x^*=(x_{1}^*,x_{2}^*,\cdots,x_{N}^*)\\
&\text{s.t.}~~ x_{i}^* = \arg\min_{\x\in\mathbb{R}^{n}:\x_{-i}=\x_{-i}^*}f_{i}(\x)
\end{aligned}
\end{equation}
Here $N$ denotes the number of players, and $x_{i}\in\mathbb{R}^{n_{i}}$ the strategy of the $i$-th player where $n_i$ is the dimension of his action space.
Let $n=\sum_{i=1}^{N}n_{i}$, and $\x=(x_1,x_2,\cdots,x_N)\in\mathbb{R}^{n}$ denotes the joint pure strategy among all players while $\x_{-i}=(x_{1},\cdots,x_{i-1},x_{i+1},\cdots,x_{N})\in\mathbb{R}^{n-n_{i}}$ the joint pure strategy among players except $i$.
$f_{i}:\mathbb{R}^{n}\rightarrow\mathbb{R}$ denotes the utility function (cost) of $i$-th player.
A solution $\x^*$ to (\ref{eqn:ne}) is called a pure strategy Nash equilibrium.
\subsection{Nikaido-Isoda (NI) Function}
In the paper (\cite{nikaido1955note}), Nikaido-Isoda (NI) function is introduced as:
\begin{equation}
\label{ni}
\phi(\x) = \sum_{i=1}^{N}\left(f_{i}(\x)-\inf_{\hat{\x}\in \mathbb{R}^{n}:\hat{\x}_{-i}=\x_{-i}}f_{i}(\hat{\x})\right)\triangleq \sum_{i=1}^{N}\phi_{i}(\x)
\end{equation}
From the Equation (\ref{ni}), we know $\phi(\x)\geqslant 0$ for $\forall \x\in \mathbb{R}^{n}$, and $\phi(\x)=0$ is the global minimum of NI function which can only be achieved at a Nash equilibrium (NE).
Therefore, a common algorithm of computing NE points is minimizing the NI function above. 
However, it is a huge difficulty to handle the global infimum. 
On the one hand, global infimum can not be obtained in finite time. 
On the other hand, the infimum can be unbounded below in some games, for example the two-player bi-linear games, where $f_{1}(\x) = x_1^{T}Mx_2 = -f_{2}(\x)$. 
All of the facts above show us the shortcomings of NI function, and in order to rectify them, \cite{raghunathan2019game} introduces the following Gradient-based Nikaido-Isoda (GNI) function.

\subsection{Gradient-based Nikaido-Isoda (GNI) Function}
If we calculate local infimum in the NI function $\phi(\x)$ instead of global infimum, the time complexity and unbounded infimum are no longer shortcomings. 
In precise, given the local radius $\lambda$, local infimum can be approximated by steepest descent direction, and we get the following GNI function:
\[V(\x; \lambda) = \sum_{i=1}^{N}\big(f_{i}(\x)-f_{i}(x_{1}, \cdots, x_{i-1}, x_{i} - \lambda \nabla_{i}f_{i}(\x), x_{i+1},\cdots,x_{N})\big)\]
By minimizing $V(\x,\lambda)$, a stationary Nash point $\x^*$, where $\nabla_{x_i}f_i(\x^*)=0$ for $\forall i$, can be approximated efficiently. 
Furthermore, if all the utility functions $f_{i}$ are convex, then the stationary Nash points (SNP) obtained are actually Nash Equilibrium (NE).

\section{(MC-GNI) Gradient-based Nikaido-Isoda Function of Mixed Strategy on Continuous Games}
In this section, we are going to introduce our novel Gradient-based Nikaido-Isoda function of mixed strategy on continuous games (MC-GNI), which is used to get an approximated solution of the following optimization problem.
\begin{equation}
\label{opti-2}
\begin{aligned}
&\text{Find}~\mathbf{\pi}^* = (\pi_{1}^*,\pi_{2}^*,\cdots,\pi_{N}^*)\\
&\text{s.t.}~\pi_{i}^* = \arg\min_{\mathbf{\pi}:\pi_{-i}=\pi_{-i}^*}\mathop\mathbb{E}\limits_{x_{j}\sim\pi_{j},~\forall j}f_{i}(x_{1},x_{2},\cdots,x_{N})
\end{aligned}
\end{equation}
Before we solve this optimization problem, there is another fundamental question, which is how we should represent (or parametrize) a distribution $\pi_{i}$. 
The simplest way to do so is to parametrize its density function.
However, not every distribution has its density function, such as Dirac distribution, and it will be inconvenient for us to do sampling from only a density function. 
Therefore, we introduce another way, adopting the pushforward measure to represent a distribution.

Given a distribution $\mu_0$ and a mapping $g(\cdot)$, data $\x$ drown from $\mu_0$ can be transported into a new distribution $\mu_1$ (constituted by $g(\x)$).
Technically speaking, $\mu_1$ is called the pushforward measure of $\mu_0$ by mapping $g$, denoted by $\mu_{1}=g^{\#}(\mu_{0})$.

Here, for $\forall j\in[N]$, we prepare each distribution $\pi_{j}$ a corresponding pushforward function $g_{j}:\mathbb{R}^{d}\rightarrow\mathbb{R}^{n_{j}}$, and we have:
\[\pi_{j} = g_{j}^{\#}(U)\]
where $U$ stands for the uniform distribution on $[0,1]^{d}$. 
Each time we want to sample from distribution $\pi_{i}$, we only need to sample several $\omega_{i}\in[0,1]^{d}$ from distribution $U$ and calculate $g_{i}(\omega_{i})$. 
Then, these $g_{i}(\omega_{i})$ form a sample set from distribution $\pi_{i}$. 
And optimization problem (\ref{opti-2}) becomes:
\begin{equation}
\begin{aligned}
&\text{Find}~\mathbf{g}^* = (g_{1}^*,g_{2}^*,\cdots,g_{N}^*)\\
&\text{s.t.}~g_{i}^* = \arg\min_{\mathbf{g}:g_{-i}=g_{-i}^*}\mathop\mathbb{E}\limits_{\omega_{j}\sim U,~\forall j}f_{i}(g_{1}(\omega_{1}),g_{2}(\omega_{2}),\cdots,g_{N}(\omega_{N}))
\end{aligned}
\end{equation}
To solve the optimization problem above, we consider the following Gradient-based Nikaido-Isoda function of Mixed strategy on Continuous games (MC-GNI), generalized from the GNI function introduced above, and we call this function $V$ the \textbf{local regret}:
\begin{equation}
\label{eqn:def_mcgni}
\begin{aligned}
V(g_1,g_2,\cdots,g_N;\lambda) &= \sum_{i=1}^{N}F_{i}(g_1,g_2,\cdots,g_N)-F_{i}(g_1,\cdots,g_{i-1},g_{i}-\lambda\delta_{g_{i}}F_{i},\cdots,g_{N})\\
&\triangleq\sum_{i=1}^{N} V_{i}(g_1,g_2,\cdots,g_N;\lambda)
\end{aligned}
\end{equation}
Here, 
$\delta_{g_{i}}F_{i}$ stands for the 1-st order variation of functional $F_{i}$ on element function $g_{i}$ and 
\[F_{i}(g_1,g_2,\cdots,g_N) = \mathop\mathbb{E}\limits_{\omega_{j}\sim U, ~\forall j}\left[f_{i}(g_{1}(\omega_{1}), g_{2}(\omega_{2}),\cdots,g_{N}(\omega_{N}))\right]\]
By minimizing the functional $V(g_1,g_2,\cdots,g_N;\lambda)$, we can approximately get stationary Nash points (SNP), and even get Nash equilibrium if all the utility functions $f_{i}$ are convex. We will prove them in the next section.

In practice, we further parametrize these pushforward functions as: $g_{i}(\cdot) = g_{i}(\cdot, \theta_{i})$, to efficiently calculate derivatives instead of variations.
For simplicity, we denote $g_{i}$ as $g_{\theta_{i}}$. In order to obtain a better expressibility, we use neural networks as the architecture to parametrize these pushforward functions. Then, MC-GNI function $V$ can be transformed to:
\[V(g_{\theta_1},g_{\theta_2},\cdots,g_{\theta_N};\lambda)=\sum_{i=1}^{N}F_{i}(g_{\theta_1},g_{\theta_2},\cdots,g_{\theta_N})-F_{i}(g_{\theta_1},\cdots,g_{\theta_{i-1}},g_{\theta_{i}-\lambda\partial_{\theta_{i}}F_{i}},\cdots,g_{\theta_{N}})\]
Finally, the MC-GNI function can be minimized by implying gradient descent on these function parameters $\theta_{i},~i\in[N]$, the convergence of which is proved in the next section.


\section{Theoretical Analysis of MC-GNI}
\subsection{The Sufficient and Necessary Condition of Stationary Nash Point}
As a mixed strategy of an $N$-player continuous game, $(\pi_{1},\pi_{2},\cdots, \pi_{N}) = (g_{1}^{\#}U, g_{2}^{\#}U, \cdots, g_{N}^{\#}U)$ is a stationary Nash point (SNP) if and only if for $\forall i\in[N]$, the 1-st order variation
\begin{equation}\label{2}
\delta_{g_{i}}(F_{i})[\sigma(x)] = 0
\end{equation}
holds at each direction $\sigma(x)$.
Here:
\[F_{i}(g_{1},g_{2},\cdots,g_{N}) = \mathop\mathbb{E}\limits_{\omega_{j}\sim U, ~\forall j}\left[f_{i}(g_{1}(\omega_{1}), g_{2}(\omega_{2}),\cdots,g_{N}(\omega_{N}))\right]\]
is the expectation of the $i$-th player's utility with the form of $N$-variable functional. 
Now, we compute the variation above and deduce the sufficient and necessary condition of SNP.
\begin{equation}
\begin{aligned}
\label{variation}
\delta_{g_{i}}(F_{i})[\sigma(x)] &= \lim_{\epsilon\rightarrow 0}\frac{1}{\epsilon}\left(F_{i}(g_{1},g_{2},\cdots,g_{N})-F_{i}(g_{1},\cdots,g_{i}-\epsilon\sigma,\cdots,g_{N})\right)\\
&= \mathop\mathbb{E}\limits_{\omega_{j}\sim U, ~\forall j}[\sigma(\omega_{i})^{T}\cdot\nabla_{i}f_{i}(g_{1}(\omega_{1}), g_{2}(\omega_{2}),\cdots,g_{N}(\omega_{N}))]\\
&= \mathop\mathbb{E}\limits_{\omega_{i}\sim U} [\sigma(\omega_{i})^{T}\cdot\mathop\mathbb{E}\limits_{\omega_{j}\sim U, ~\forall j\neq i}[\nabla_{i}f_{i}(g_{1}(\omega_{1}), g_{2}(\omega_{2}),\cdots,g_{N}(\omega_{N}))]]\\
&\triangleq \mathop\mathbb{E}\limits_{\omega_{i}\sim U}[\sigma(\omega_{i})\cdot G(\omega_{i})] = \int_{[0,1]^{d}}\sigma(\omega_{i})\cdot G(\omega_{i})d\omega_{i}
\end{aligned}
\end{equation}
where:
\[G(\omega_{i}) = \mathop\mathbb{E}\limits_{\omega_{j}\sim U, ~\forall j\neq i}[\nabla_{i}f_{i}(g_{1}(\omega_{1}), g_{2}(\omega_{2}),\cdots,g_{N}(\omega_{N}))]\]
For SNP, Equation (\ref{2}) holds at each direction $\sigma(x)$, i.e. $G(\omega_{i})\equiv 0$. 
Therefore, we have
\begin{Theorem}
$\pi = (\pi_{1},\pi_{2},\cdots, \pi_{N}) = (g_{1}^{\#}U, g_{2}^{\#}U, \cdots, g_{N}^{\#}U)$ is a stationary Nash point (SNP) for an $N$-player continuous game if and only if:
\[\mathop\mathbb{E}\limits_{\omega_{j}\sim U, ~\forall j\neq i}[\nabla_{i}f_{i}(g_{1}(\omega_{1}),g_{2}(\omega_{2}),\cdots,g_{N}(\omega_{N}))]\equiv 0,~~~~~\forall \omega_{i}\in\mathbb{R}^{d}\]
holds for all $i\in[N]$.
\end{Theorem}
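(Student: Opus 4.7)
The plan is to invoke the definition of a stationary Nash point given just above the theorem and combine it with the variational computation already carried out in the chain of equalities labelled~(\ref{variation}). By definition, $\pi$ is an SNP iff $\delta_{g_i}(F_i)[\sigma] = 0$ for every admissible direction $\sigma$ and every $i \in [N]$. That same chain of equalities has already rewritten this variation as the pairing
\[\delta_{g_i}(F_i)[\sigma] \;=\; \int_{[0,1]^d}\sigma(\omega_i)\cdot G(\omega_i)\,d\omega_i,\]
where $G$ is exactly the conditional expectation appearing in the theorem. So the only remaining content is the bidirectional link between ``this pairing vanishes on every test function $\sigma$'' and ``the kernel $G$ is identically zero''.

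First I would fix $i\in[N]$ and split into the two directions. The ($\Leftarrow$) direction is immediate: if $G(\omega_i)\equiv 0$, then the integrand $\sigma(\omega_i)\cdot G(\omega_i)$ vanishes pointwise, hence so does its integral, for every $\sigma$; doing this for each $i$ yields $\delta_{g_i}(F_i)[\sigma]=0$ and thus the SNP condition.

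For ($\Rightarrow$) I would apply the fundamental lemma of the calculus of variations. Assuming the space of admissible perturbations $\sigma\colon[0,1]^d\to\mathbb{R}^{n_i}$ is sufficiently rich (e.g.\ continuous, or dense in $L^2([0,1]^d;\mathbb{R}^{n_i})$, which matches the neural-net parametrization used later), the vanishing of $\int \sigma\cdot G\,d\omega_i$ for all such $\sigma$ forces each component of $G$ to be zero almost everywhere on $[0,1]^d$. The standard way I would carry this out is to test against $\sigma(\omega_i)=G(\omega_i)\,\eta(\omega_i)$ for a smooth cut-off $\eta$, yielding $\int \eta\,\|G\|^2=0$ and hence $G=0$ a.e.; alternatively one tests against a dense family of coordinate bumps to extract each component. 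Under a mild continuity assumption on $g_j$ and $\nabla_i f_i$, $G$ is continuous in $\omega_i$ and the a.e.\ equality upgrades to pointwise equality.

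The main obstacle I anticipate is the regularity bookkeeping in the forward direction: as stated, the theorem claims $G\equiv 0$ at every $\omega_i$ (and moreover on $\mathbb{R}^d$ rather than just $[0,1]^d$, which appears to be a minor typo, since $U$ is supported on $[0,1]^d$). Variational arguments only guarantee almost-everywhere vanishing unless continuity of $G$ is imposed; I would therefore state this continuity hypothesis explicitly and reference the classical fundamental lemma of the calculus of variations to close the argument. Apart from this point, the proof is essentially a restatement of the computation already displayed just above the theorem.
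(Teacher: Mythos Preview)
Your proposal is correct and follows exactly the paper's approach: the paper's ``proof'' consists entirely of the computation displayed in~(\ref{variation}) followed by the one-line assertion that the pairing vanishing for every direction $\sigma$ is equivalent to $G(\omega_i)\equiv 0$. You are in fact more careful than the paper itself, which neither names the fundamental lemma of the calculus of variations nor addresses the a.e.-versus-pointwise and $[0,1]^d$-versus-$\mathbb{R}^d$ issues you flag.
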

From Equation (\ref{variation}), we also know that:
\[\delta_{g_{i}}(F_{i})[\sigma(\omega_{i})] = \langle G(\omega_{i}) ,\sigma(\omega_{i})\rangle\]
In other words, the steepest direction is:
\[\delta_{g_{i}}(F_{i}) = G(\omega_{i}) = \mathop\mathbb{E}\limits_{\omega_{j}\sim U, ~\forall j\neq i}[\nabla_{i}f_{i}(g_{1}(\omega_{1}), g_{2}(\omega_{2}),\cdots,g_{N}(\omega_{N}))]\]

Then we show the relationship between stationary Nash point and Nash equilibrium.
\begin{Theorem}
Denote $\mathcal{S}^{SNP}, \mathcal{S}^{NE}$ as the set of SNPs and NEs of a particular $N$-player continuous game. Obviously, $\mathcal{S}^{NE}\subseteq \mathcal{S}^{SNP}$. If all utility functions $f_{i}$ are convex, we have: $\mathcal{S}^{NE}= \mathcal{S}^{SNP}$
\end{Theorem}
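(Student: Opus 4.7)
The inclusion $\mathcal{S}^{NE}\subseteq\mathcal{S}^{SNP}$ being granted, the plan is to prove the reverse inclusion under convexity. Fix a point $\pi=(g_1^\#U,\dots,g_N^\#U)\in\mathcal{S}^{SNP}$. For each $i$ I would show that $\pi_i$ is a best response to $\pi_{-i}$, i.e., no deviation $\tilde\pi_i$ can strictly decrease $F_i$. Because any probability measure on $\mathbb{R}^{n_i}$ can be realized as $\tilde g_i^\# U$ for some measurable $\tilde g_i$, it suffices to work entirely within the pushforward parametrization and compare $F_i(g_1,\dots,g_i,\dots,g_N)$ with $F_i(g_1,\dots,\tilde g_i,\dots,g_N)$ for arbitrary $\tilde g_i$.

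The core step is a first-order convexity inequality integrated against the product measure. For every $\omega=(\omega_1,\dots,\omega_N)\in[0,1]^{dN}$, convexity of $f_i$ in its $i$-th argument (with the other arguments held at $g_j(\omega_j)$) gives
\begin{equation*}
f_i\bigl(g_1(\omega_1),\dots,\tilde g_i(\omega_i),\dots,g_N(\omega_N)\bigr)\;\geq\; f_i\bigl(g_1(\omega_1),\dots,g_i(\omega_i),\dots,g_N(\omega_N)\bigr)+\bigl\langle\nabla_i f_i(\cdots),\,\tilde g_i(\omega_i)-g_i(\omega_i)\bigr\rangle.
\end{equation*}
Taking expectation over $\omega\sim U^{\otimes N}$ and using the tower property (conditioning on $\omega_i$, since $\tilde g_i(\omega_i)-g_i(\omega_i)$ depends on $\omega_i$ only), the cross term becomes
\begin{equation*}
\mathop\mathbb{E}\limits_{\omega_i\sim U}\Bigl[\bigl\langle\tilde g_i(\omega_i)-g_i(\omega_i),\,G(\omega_i)\bigr\rangle\Bigr],
\end{equation*}
where $G(\omega_i)=\mathbb{E}_{\omega_{-i}\sim U}[\nabla_i f_i(g_1(\omega_1),\dots,g_N(\omega_N))]$ is exactly the quantity appearing in the SNP characterization of Theorem~1. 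By the SNP hypothesis $G\equiv0$, the cross term vanishes, leaving $F_i(g_1,\dots,\tilde g_i,\dots,g_N)\geq F_i(g_1,\dots,g_i,\dots,g_N)$. Since $i$ and $\tilde g_i$ were arbitrary, $\pi$ satisfies the definition of a (mixed) Nash equilibrium in~(3).

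The routine calculations are the convexity inequality and the interchange of expectation with the conditional expectation; I do not expect any difficulty there. The main obstacle is conceptual rather than computational: one must be careful that the SNP condition $G\equiv0$, which was derived as a \emph{necessary} condition for vanishing first-order variations among smooth perturbations $\sigma$, is in fact strong enough to control a \emph{global} deviation $\tilde g_i-g_i$ that need not be small. Convexity is precisely what bridges this gap, because it upgrades the first-order information encoded in $G$ to a global lower bound on $F_i$. A secondary point to address is that "all $f_i$ convex" should be read as convexity in the $i$-th block $x_i$ for each $i$ with the remaining arguments fixed; joint convexity is not needed, and measurability/integrability of the relevant gradients should be invoked (or assumed, consistently with the standing smoothness hypotheses used when writing $\nabla_i f_i$ inside an expectation).
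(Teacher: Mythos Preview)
Your proposal is correct and follows essentially the same route as the paper: apply the first-order convexity inequality to $f_i$ in its $i$-th block, take expectations, use the tower property to isolate $G(\omega_i)$, and invoke the SNP characterization $G\equiv 0$ to kill the linear term. Your additional remarks---that only block convexity is needed and that every mixed strategy arises as some $\tilde g_i^\# U$---are valid refinements the paper leaves implicit.
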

\begin{proof}
Suppose $\pi = (\pi_{1},\pi_{2},\cdots, \pi_{N}) = (g_{1}^{\#}U, g_{2}^{\#}U, \cdots, g_{N}^{\#}U)$ is an SNP, we will prove it an NE when all functions $f_{i}$ are convex. According to the convexity and the condition of SNPs, we know that for $\forall  i\in[N]$ and any other pushforward function $\tilde{g}_{i}$:
\begin{equation}
\begin{aligned}
&~~F_{i}(g_1, g_2, \cdots,g_N) - F_{i}(g_1, \cdots,\tilde{g}_{i},\cdots, g_{N})\\
= &\mathop\mathbb{E}\limits_{\omega_{j}\sim U,~\forall j}\left[f_{i}(g_1(\omega_{1}), \cdots,\tilde{g}_{i}(\omega_{i}),\cdots, g_{N}(\omega_{N}))-f_{i}(g_1(\omega_{1}), g_2(\omega_{2}), \cdots,g_N(\omega_{N}))\right]\\
\geqslant &\mathop\mathbb{E}\limits_{\omega_{j}\sim U,~\forall j}\left[(\tilde{g}_{i}(\omega_{i})-g_{i}(\omega_{i}))^{T}\cdot\nabla_{i}f_{i}(g_1(\omega_{1}), g_2(\omega_{2}), \cdots,g_N(\omega_{N}))\right]\\
=& \mathop\mathbb{E}\limits_{\omega_{i}\sim U}[(\tilde{g}_{i}(\omega_{i})-g_{i}(\omega_{i}))^{T}\cdot\mathop\mathbb{E}\limits_{\omega_{j}\sim U, ~\forall j\neq i}[\nabla_{i}f_{i}(g_{1}(\omega_{1}),g_{2}(\omega_{2}),\cdots,g_{N}(\omega_{N}))]]\\
=& \mathop\mathbb{E}\limits_{\omega_{i}\sim U}[(\tilde{g}_{i}(\omega_{i})-g_{i}(\omega_{i}))^{T}\cdot \delta_{g_{i}}(F_{i})] = 0
\end{aligned}
\end{equation}
which leads to our conclusion, that $\pi= (g_{1}^{\#}U, g_{2}^{\#}U, \cdots, g_{N}^{\#}U)$ is a global Nash equilibrium.
\end{proof}

Next, we show the relationship between the zeros of MC-GNI function $V(g_{1},g_{2},\cdots,g_{N})$ and SNPs of the $N$-player continuous game.

\begin{Lemma}
\label{lemma-1}
Assume $f:\mathbb{R}^{d}\rightarrow\mathbb{R}$ is a twice differentiable function, and its 1-st order gradient $\nabla f$ is $L_{f}$-Lipschitz continuous. Then for $\forall x,y\in\mathbb{R}^{d}$, we have:
\[|f(y) - f(x) - \langle\nabla f(x), y-x\rangle|\leqslant \frac{1}{2}L_{f}\|y-x\|_{2}^{2}\]
\end{Lemma}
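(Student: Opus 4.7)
The plan is to use the standard "descent lemma" argument via the fundamental theorem of calculus along the segment joining $x$ to $y$. Define the one-variable auxiliary function $\phi(t) = f(x + t(y-x))$ for $t \in [0,1]$. Since $f$ is differentiable, $\phi'(t) = \langle \nabla f(x + t(y-x)), y-x\rangle$, and integrating yields
\[
f(y) - f(x) = \phi(1) - \phi(0) = \int_{0}^{1} \langle \nabla f(x + t(y-x)), y-x\rangle \, dt.
\]

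Next I would subtract the linear term by rewriting it as an integral of a constant, namely $\langle \nabla f(x), y-x\rangle = \int_{0}^{1} \langle \nabla f(x), y-x\rangle \, dt$, so that
\[
f(y) - f(x) - \langle \nabla f(x), y-x\rangle = \int_{0}^{1} \langle \nabla f(x + t(y-x)) - \nabla f(x), \, y-x\rangle \, dt.
\]
Then I would take absolute values, push them inside the integral via the triangle inequality, apply Cauchy--Schwarz inside the inner product, and use the $L_f$-Lipschitz hypothesis on $\nabla f$ to bound $\|\nabla f(x + t(y-x)) - \nabla f(x)\|_2 \le L_f \cdot t\|y-x\|_2$. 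This gives an integrand bounded by $L_f \, t \, \|y-x\|_2^2$, and integrating $t$ over $[0,1]$ produces the factor $1/2$, yielding the claimed bound.

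There isn't really a hard step here; the argument is textbook. The only minor subtlety to be careful about is making sure the fundamental-theorem-of-calculus step is justified — this is where twice differentiability (hence $C^1$ regularity of $f$) is used to guarantee that $\phi$ is continuously differentiable on $[0,1]$, which is exactly what lets us write $f(y) - f(x)$ as the integral of $\phi'$. Note that the Lipschitz hypothesis on $\nabla f$ alone would already suffice for the estimate, but the stronger twice-differentiability assumption stated in the lemma makes the FTC step entirely routine without invoking absolute continuity.
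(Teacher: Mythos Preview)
Your proposal is correct and follows essentially the same approach as the paper's own proof: both express $f(y)-f(x)-\langle\nabla f(x),y-x\rangle$ as $\int_{0}^{1}\langle\nabla f(x+\tau(y-x))-\nabla f(x),\,y-x\rangle\,d\tau$, then apply the triangle inequality, Cauchy--Schwarz, the Lipschitz bound on $\nabla f$, and integrate $L_f\tau\|y-x\|_2^2$ over $[0,1]$ to obtain the factor $\tfrac12$. Your write-up is slightly more explicit about introducing the auxiliary function $\phi$ and justifying the FTC step, but the argument is the same.
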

\begin{proof}
According to the condition of $f$, there holds the following equations.
\begin{equation}
\begin{aligned}
|f(y) - f(x) - \langle\nabla f(x), y-x\rangle|&=\left|\int_{0}^{1}\langle\nabla f(x+\tau(y-x))-\nabla f(x),y-x \rangle d\tau\right|\\
&\leqslant \int_{0}^{1}\left|\langle\nabla f(x+\tau(y-x))-\nabla f(x),y-x \rangle\right| d\tau\\
&\leqslant \int_{0}^{1}\|\nabla f(x+\tau(y-x))-\nabla f(x)\|\cdot\|y-x\| d\tau\\
&\leqslant \int_{0}^{1}L_{f}\tau\|y-x\|_{2}^{2} d\tau = \frac{1}{2}L_{f}\|y-x\|_{2}^{2}
\end{aligned}
\end{equation}
\end{proof}

With this lemma, we can show that each global minimum of $V(g_{1},g_{2},\cdots,g_{N})$ is also an SNP.
\begin{Theorem}
\label{thm-3}
If each utility function $f_{i}$ is twice differentiable and its 1-st order gradient $\nabla f_{i}$ is $L_{f}$-Lipschitz continuous. Then:
\[\frac{\lambda}{2}\|\delta_{g_{i}}F_{i}(g_1, g_2,\cdots,g_N)\|^{2}\leqslant V_{i}(g_1, g_2,\cdots,g_N;\lambda)\leqslant \frac{3\lambda}{2}\|\delta_{g_{i}}F_{i}(g_1, g_2,\cdots,g_N)\|^{2}\]
holds when $0<\lambda\leqslant \frac{1}{L_{f}}$. Here, $\|\cdot\|^{2}$ is a functional norm which means:
\[\|f\|^{2}=\int_{[0,1]^{d}}\|f(\omega_{i})\|_{2}^{2}~d\omega_{i}=\mathop\mathbb{E}\limits_{\omega_{i}\sim U}\|f(\omega_{i})\|_{2}^{2}\]
\end{Theorem}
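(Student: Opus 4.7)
The plan is to apply Lemma \ref{lemma-1} to $f_i$ pointwise for each realization $(\omega_1, \ldots, \omega_N)$, then take the expectation against the product uniform distribution and use the tower property to recognize the functional norm $\|\delta_{g_i} F_i\|^2$. Because only the $i$-th block of $f_i$'s argument is perturbed inside $V_i$, the quadratic Taylor remainder of Lemma \ref{lemma-1} scales like $\lambda^2 \|\delta_{g_i} F_i(\omega_i)\|_2^2$, which under $\lambda \leq 1/L_f$ is exactly small enough to yield both sides of the sandwich.

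Concretely, for a fixed realization $\omega = (\omega_1, \ldots, \omega_N)$ I would take $x = (g_1(\omega_1), \ldots, g_N(\omega_N))$ and let $y$ be obtained from $x$ by shifting only its $i$-th block by $-\lambda\, \delta_{g_i} F_i(\omega_i)$. Then $\|y - x\|_2^2 = \lambda^2 \|\delta_{g_i} F_i(\omega_i)\|_2^2$ and $\langle \nabla f_i(x), y - x \rangle = -\lambda \langle \nabla_i f_i(x), \delta_{g_i} F_i(\omega_i) \rangle$, so Lemma \ref{lemma-1} yields
\[f_i(x) - f_i(y) \,=\, \lambda \langle \nabla_i f_i(x), \delta_{g_i} F_i(\omega_i) \rangle + R(\omega), \qquad |R(\omega)| \leq \tfrac{1}{2} L_f \lambda^2 \|\delta_{g_i} F_i(\omega_i)\|_2^2.\]

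Next, I would take $\mathop{\mathbb{E}}_{\omega_j \sim U,\,\forall j}$ of both sides. The left side becomes $V_i(g_1, \ldots, g_N; \lambda)$ by definition. For the linear term I condition on $\omega_i$ first and invoke the identity already established in equation \eqref{variation}, namely $\mathop{\mathbb{E}}_{\omega_j, j \neq i}[\nabla_i f_i(x)] = G(\omega_i) = \delta_{g_i} F_i(\omega_i)$; the inner product then collapses to $\|\delta_{g_i} F_i(\omega_i)\|_2^2$, whose outer expectation over $\omega_i$ is exactly $\|\delta_{g_i} F_i\|^2$ in the functional norm specified in the statement. The remainder contributes at most $\tfrac{1}{2} L_f \lambda^2 \|\delta_{g_i} F_i\|^2$ in absolute value. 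Combining these pieces, I would get $(\lambda - \tfrac{1}{2} L_f \lambda^2)\|\delta_{g_i} F_i\|^2 \leq V_i \leq (\lambda + \tfrac{1}{2} L_f \lambda^2)\|\delta_{g_i} F_i\|^2$, and a one-line tightening using $\lambda L_f \leq 1$ collapses this to the advertised window $[\tfrac{\lambda}{2}, \tfrac{3\lambda}{2}] \cdot \|\delta_{g_i} F_i\|^2$.

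The argument has no genuine obstacle; the only step meriting care is the Fubini/tower interchange that replaces $\nabla_i f_i$ by its conditional mean $G(\omega_i)$ so that the cross term becomes $\|\delta_{g_i} F_i\|^2$ and not merely an inner product with it. That computation is precisely the one already performed inside equation \eqref{variation}, so it can be reused verbatim; everything else reduces to the pointwise Lipschitz remainder of Lemma \ref{lemma-1} applied along the single fibre being perturbed.
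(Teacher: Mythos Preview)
Your proposal is correct and follows essentially the same route as the paper: apply Lemma~\ref{lemma-1} pointwise to the perturbation in the $i$-th block, take expectations, use the tower property from equation~\eqref{variation} to collapse the linear term to $\|\delta_{g_i}F_i\|^2$, and invoke $\lambda L_f\leq 1$ to obtain the sandwich. The only cosmetic difference is that the paper writes out only the upper bound and declares the lower bound ``similar,'' whereas you derive both sides simultaneously.
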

\begin{proof}
\begin{equation}
\begin{aligned}
&~V_{i}(g_1, g_2,\cdots,g_N;\lambda)\\
=&F_{i}(g_1, g_2,\cdots,g_N)-F_{i}(g_{1},\cdots,g_{i}-\lambda\delta_{g_{i}}F_{i},\cdots,g_{N})\\
=&\mathop\mathbb{E}\limits_{\omega_{j}\sim U,~\forall j}\left[f_{i}(g_{1}(\omega_{1}), g_{2}(\omega_{2}),\cdots,g_{N}(\omega_{N}))-f_{i}(g_{1}(\omega_{1}),\cdots,g_{i}(\omega_{i})-\lambda\delta_{g_{i}}F_{i}(\omega_{i}),\cdots,g_{N}(\omega_{N}))\right]
\end{aligned}
\end{equation}
Then, according to Lemma \ref{lemma-1}:
\begin{equation}
\begin{aligned}
&~V_{i}(g_1, g_2,\cdots,g_N;\lambda)\\
\leqslant& \mathop\mathbb{E}\limits_{\omega_{j}\sim U,~\forall j}\left[\lambda (\delta_{g_{i}}F_{i}(\omega_{i}))^{T}\nabla_{i}f_{i}(g_{1}(\omega_{1}), g_{2}(\omega_{2}),\cdots,g_{N}(\omega_{N}))+\frac{L_{f}}{2}\lambda^{2}\|\delta_{g_{i}}F_{i}(\omega_{i})\|^{2}\right]\\
=& \lambda\mathop\mathbb{E}\limits_{\omega_{i}\sim U}\|\delta_{g_{i}}F_{i}(g_1, g_2,\cdots,g_N)(\omega_{i})\|_{2}^{2}+\frac{L_{f}}{2}\lambda^{2}\mathop\mathbb{E}\limits_{\omega_{i}\sim U}\|\delta_{g_{i}}F_{i}(g_1, g_2,\cdots,g_N)(\omega_{i})\|_{2}^{2}]\\
\leqslant&\frac{3\lambda}{2}\|\delta_{g_{i}}F_{i}(g_1, g_2,\cdots,g_N)\|^2
\end{aligned}
\end{equation}
And the other side of this inequality is similar.
\end{proof}
The theorem above tells us that, $V(g_1,g_2,\cdots,g_N;\lambda)$ is always non-negative as long as $\lambda\leqslant \frac{1}{L_f}$. And its global minima, or in the other words, its zeros, are surely SNPs, because for $\forall i\in[N]$:
\[V_{i}(g_1,g_2,\cdots,g_N;\lambda)=0~\Leftrightarrow~ \delta_{g_{i}}F_{i}(g_1, g_2,\cdots,g_N)=0\]

Finally, we analyze the stability of SNPs. In the following theorem, we show that the 2-nd order variation of functional $V$ is a positive semidefinite operator, which confirms the stability of SNPs.

\begin{Theorem}
The 2-nd order variation $\delta^{2}V(\mathbf{g}^*;\lambda)$ is a positive semidefinite operator for $\forall\mathbf{g}^*\in\mathcal{S}^{SNP}$ and $0\leqslant\lambda\leqslant\frac{1}{L_f}$.
\end{Theorem}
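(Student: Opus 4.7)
The plan is to reduce the statement to the standard second-order necessary condition at a global minimum. The key observation is that Theorem \ref{thm-3} already gives global nonnegativity of $V$, and by definition every $\mathbf{g}^* \in \mathcal{S}^{SNP}$ makes $V$ vanish, so each such $\mathbf{g}^*$ is a global minimizer of $V(\cdot;\lambda)$; positive semidefiniteness of $\delta^2 V(\mathbf{g}^*;\lambda)$ then comes for free.

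I would carry out the argument in three steps. First, invoke the lower bound from Theorem \ref{thm-3}: for $0 \leq \lambda \leq \tfrac{1}{L_f}$,
\[
V(\mathbf{g};\lambda) = \sum_{i=1}^{N} V_i(\mathbf{g};\lambda) \geq \sum_{i=1}^{N}\tfrac{\lambda}{2}\|\delta_{g_i} F_i(\mathbf{g})\|^2 \geq 0
\]
uniformly in $\mathbf{g}$. Second, at any $\mathbf{g}^* \in \mathcal{S}^{SNP}$ the characterization from the SNP theorem (Theorem 1) gives $\delta_{g_i} F_i(\mathbf{g}^*) \equiv 0$ for every $i$, so the perturbed argument inside $V_i$ collapses to $\mathbf{g}^*$ itself and $V_i(\mathbf{g}^*;\lambda) = F_i(\mathbf{g}^*) - F_i(\mathbf{g}^*) = 0$, hence $V(\mathbf{g}^*;\lambda) = 0$ and $\mathbf{g}^*$ is a global minimizer. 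Third, for any admissible variation $\sigma = (\sigma_1,\ldots,\sigma_N)$ and small $\epsilon$, the second-order expansion around $\mathbf{g}^*$ gives
\[
0 \leq V(\mathbf{g}^* + \epsilon\sigma;\lambda) = \epsilon\,\delta V(\mathbf{g}^*;\lambda)[\sigma] + \tfrac{\epsilon^2}{2}\,\delta^2 V(\mathbf{g}^*;\lambda)[\sigma,\sigma] + o(\epsilon^2).
\]
Considering both signs of $\epsilon$ forces $\delta V(\mathbf{g}^*;\lambda)[\sigma] = 0$, and then dividing by $\epsilon^2 > 0$ and letting $\epsilon \to 0$ yields $\delta^2 V(\mathbf{g}^*;\lambda)[\sigma,\sigma] \geq 0$ for every admissible $\sigma$, which is exactly the claim.

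The main obstacle I anticipate is justifying the second-order Taylor expansion of $V$ at $\mathbf{g}^*$ in the infinite-dimensional functional setting. Because each $V_i$ contains $F_i$ evaluated at the implicitly defined point $g_i - \lambda\,\delta_{g_i}F_i(\mathbf{g})$, writing $\delta^2 V$ requires differentiating through the functional gradient and therefore assumes more than the $C^{1,1}$ hypothesis used in Theorem \ref{thm-3}, for instance $f_i \in C^2$ with controlled second derivatives. Once that regularity is in place the expansion is standard and no direct computation of the functional Hessian of $V$ is ever required; the entire argument is a three-line consequence of the nonnegativity bound already proven.
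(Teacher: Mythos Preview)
Your argument is correct and takes a genuinely different route from the paper. The paper proceeds by direct computation: it writes down the first variation $\delta V_i(\mathbf{g};\lambda) = \delta F_i(\mathbf{g}) - \delta F_i(\tilde{\mathbf{g}}) + \lambda\,\delta^2 F_i(\mathbf{g}) D_i \delta F_i(\tilde{\mathbf{g}})$, differentiates once more at $\mathbf{g}^*\in\mathcal{S}^{SNP}$ to obtain the explicit operator $\delta^2 V_i(\mathbf{g}^*;\lambda) = \lambda\,\delta^2 F_i(\mathbf{g}^*)\bigl[2D_i - \lambda D_i\,\delta^2 F_i(\mathbf{g}^*) D_i\bigr]\delta^2 F_i(\mathbf{g}^*)$, and then uses the Lipschitz bound $\delta^2 F_i \preceq L_f I$ together with $\lambda L_f\leq 1$ to dominate this from below by $\lambda\,(D_i\,\delta^2 F_i(\mathbf{g}^*))^T(D_i\,\delta^2 F_i(\mathbf{g}^*))\succeq 0$. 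You instead bypass all of this by observing that Theorem~\ref{thm-3} already forces $V(\cdot;\lambda)\geq 0$, that any SNP is a zero of $V$ and hence a global minimizer, and that positive semidefiniteness of the second variation is then just the standard second-order necessary condition. Your route is shorter and requires no operator algebra; the paper's route, on the other hand, yields the explicit form of $\delta^2 V_i(\mathbf{g}^*;\lambda)$ and a quantitative lower bound on it, which could be reused for local convergence rates or for showing strict definiteness under additional hypotheses. Both approaches implicitly need enough smoothness of the $f_i$ (essentially $C^2$) for $\delta^2 V$ to exist, as you correctly flag; the trivial case $\lambda=0$ gives $V\equiv 0$ and is immediate.
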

\begin{proof}
The 1-st and 2-nd order variation of $V_{i}(\mathbf{g};\lambda)$ satisfy:
\begin{equation}\label{eqn-vari-1}
\begin{aligned}
\delta V_{i}(\mathbf{g};\lambda) &= \delta F_{i}(\mathbf{g}) - \delta F_{i}(\tilde{\mathbf{g}}) + \lambda~ \delta^{2}F_{i}(\mathbf{g})D_{i}\delta F_{i}(\tilde{\mathbf{g}}),
\end{aligned}
\end{equation}
where $\mathbf{g}=(g_1,g_2,\cdots,g_N), \tilde{\mathbf{g}}=(g_1,\cdots,g_{i-1},g_{i}-\lambda\delta_{g_{i}}F_{i},\cdots,g_{N})$ and 
\[D_{i} = Diag(0_{n_1\times n_1},\cdots,0_{n_{i-1}\times n_{i-1}},I_{n_{i}\times n_{i}},0_{n_{i+1}\times n_{i+1}},\cdots,0_{n_{N}\times n_{N}})\]
is a $n\times n$ matrix. Given $\mathbf{g}^*\in \mathcal{S}^{SNP}$, then $\delta F_{i}(\mathbf{g}^*)=0$. 
\begin{equation}\label{eqn-vari-2}
\begin{aligned}
\delta^{2} V_{i}(\mathbf{g}^*;\lambda) &= \lambda~\delta^{2}F_{i}(\mathbf{g}^*)[2D_{i}-\lambda D_{i}\delta^{2}F_{i}(\mathbf{g}^*)D_{i}]\delta^{2}F_{i}(\mathbf{g}^*)\\
&\succeq \lambda~\delta^{2}F_{i}(\mathbf{g}^*)[2D_{i}-\lambda L_{f} D_{i}^2]\delta^{2}F_{i}(\mathbf{g}^*)\\
&\succeq \lambda~\delta^{2}F_{i}(\mathbf{g}^*)D_{i}\delta^{2}F_{i}(\mathbf{g}^*)\\
&= \lambda~(\delta^{2}F_{i}(\mathbf{g}^*)D_{i})^{T}(\delta^{2}F_{i}(\mathbf{g}^*)D_{i})
\end{aligned}
\end{equation}
which is positive semidefinite. Therefore: \[\delta^{2}V(\mathbf{g}^*;\lambda) = \sum_{i=1}^{N} \delta^{2}V_{i}(\mathbf{g}^*;\lambda)\]
is also positive semidefinite.
\end{proof}

\subsection{Convergence Analysis}
In this section, we analyze the convergence analysis of gradient descent:
\[\mathbf{g}^{(k+1)} = \mathbf{g}^{(k)} - \rho\cdot\delta V(\mathbf{g}^{(k)};\lambda)\]
According to the definition of functional $V(\mathbf{g};\lambda)$, it can be rewritten as the following form:
\[V(\mathbf{g};\lambda)=\mathop\mathbb{E}\limits_{\omega_{j}\sim U,~\forall j\in[N]}[G_{V}(g_{1}(\omega_{1}),g_{2}(\omega_{2}),\cdots,g_{N}(\omega_{N}))]\]
where $G_{V}=\sum_{i=1}^{N}f_{i}(y_1,y_2,\cdots,y_N)-f_{i}(y_1,\cdots,y_{i-1},y_{i}-\lambda\nabla_{i} f_{i}(y_1,y_2,\cdots,y_N),\cdots,y_{N})$.
\begin{Theorem}
Suppose $\nabla G_{V}(\x)$ is $L_{G}$-Lipschitz continuous.Through gradient descent, the function sequence $\mathbf{g}^{(k)}$ converges sublinearly to a stationary Nash point (SNP) $\mathbf{g}^*$ if $\rho < \frac{1}{L_{G}}, \lambda\leqslant \frac{1}{L_{f}}$.
\end{Theorem}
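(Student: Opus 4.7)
The plan is to run the textbook descent-lemma argument for gradient descent on the smooth functional $V$, and then use Theorem \ref{thm-3} together with the variation formula (\ref{eqn-vari-1}) to identify any resulting stationary point of $V$ with a stationary Nash point.

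The first step is to lift Lemma \ref{lemma-1} to the functional level. Because $V(\mathbf{g};\lambda)=\mathbb{E}_{\omega}[G_V(g_1(\omega_1),\ldots,g_N(\omega_N))]$ and $\nabla G_V$ is assumed $L_G$-Lipschitz, the same chain of inequalities used to prove Lemma \ref{lemma-1}, taken in expectation over $\omega$, yields the quadratic upper bound
\[V(\mathbf{g}-\rho\,\delta V;\lambda)\leq V(\mathbf{g};\lambda)-\rho\Big(1-\tfrac{\rho L_G}{2}\Big)\|\delta V(\mathbf{g};\lambda)\|^2.\]
Substituting the iteration $\mathbf{g}^{(k+1)}=\mathbf{g}^{(k)}-\rho\,\delta V(\mathbf{g}^{(k)};\lambda)$ and invoking the step-size condition $\rho<1/L_G$ makes the bracketed coefficient strictly positive, so the sequence $V(\mathbf{g}^{(k)};\lambda)$ is monotonically non-increasing. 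Since Theorem \ref{thm-3} (applicable under $\lambda\leq 1/L_f$) gives $V\geq 0$, the sequence is bounded below and converges to some $V^\star\geq 0$.

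Telescoping the descent inequality over $k=0,\ldots,K-1$ then yields
\[\sum_{k=0}^{K-1}\|\delta V(\mathbf{g}^{(k)};\lambda)\|^2\leq\frac{V(\mathbf{g}^{(0)};\lambda)-V^\star}{\rho(1-\rho L_G/2)},\]
so $\min_{k<K}\|\delta V(\mathbf{g}^{(k)};\lambda)\|^2=O(1/K)$, which is exactly the announced sublinear rate at which the iterates approach a stationary point of $V$.

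The main obstacle is the final identification of a stationary point of the auxiliary functional $V$ with a stationary Nash point of the game, since a priori $\delta V=0$ is strictly weaker than the SNP condition $\delta_{g_i}F_i=0$ for every $i$. My plan is to invoke the two-sided bound of Theorem \ref{thm-3}, namely $\tfrac{\lambda}{2}\|\delta_{g_i}F_i\|^2\leq V_i\leq\tfrac{3\lambda}{2}\|\delta_{g_i}F_i\|^2$: at an accumulation point $\mathbf{g}^\star$, formula (\ref{eqn-vari-1}) together with the $L_f$-Lipschitzness of $\nabla f_i$ and the constraint $\lambda\leq 1/L_f$ forces the residual $\delta F_i(\mathbf{g}^\star)-\delta F_i(\tilde{\mathbf{g}}^\star)$ and the correction term $\lambda\delta^2 F_i(\mathbf{g}^\star) D_i\delta F_i(\tilde{\mathbf{g}}^\star)$ to cancel only when $\delta_{g_i}F_i(\mathbf{g}^\star)=0$; plugging back into Theorem \ref{thm-3} then gives $V^\star=0$ and, by the SNP characterization (the first theorem of Section 4.1), $\mathbf{g}^\star$ is an SNP. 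I expect this last step, and in particular the formal cancellation argument inside (\ref{eqn-vari-1}) that turns $\delta V=0$ into $\delta_{g_i}F_i=0$, to be the most delicate part of the proof.
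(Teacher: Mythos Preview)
Your descent-lemma-plus-telescoping argument is exactly what the paper does: lift Lemma~\ref{lemma-1} to the functional $V$ using the $L_G$-Lipschitz assumption on $\nabla G_V$, obtain the one-step decrease $V(\mathbf{g}^{(k+1)};\lambda)\leq V(\mathbf{g}^{(k)};\lambda)-\rho(1-\tfrac{\rho L_G}{2})\|\delta V(\mathbf{g}^{(k)};\lambda)\|^{2}$, sum, and use the nonnegativity of $V$ from Theorem~\ref{thm-3} (under $\lambda\leq 1/L_f$) to get $\min_{k\leq K}\|\delta V(\mathbf{g}^{(k)};\lambda)\|^{2}=O(1/K)$. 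The only cosmetic difference is that you subtract the limit $V^{\star}$ where the paper subtracts $0$; since $V^{\star}\geq 0$ these are equivalent.

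Where you diverge from the paper is in the last paragraph. You correctly flag that $\delta V=0$ is, a priori, weaker than the SNP condition $\delta_{g_i}F_i=0$ for all $i$, and you sketch a cancellation argument through (\ref{eqn-vari-1}) to close this gap. The paper, however, simply does not do this: its proof terminates immediately after the $O(1/K)$ bound on $\min_{k}\|\delta V(\mathbf{g}^{(k)};\lambda)\|^{2}$ with ``which completes our proof,'' tacitly identifying stationarity of $V$ with the SNP property rather than deriving it. So your proposal already matches the paper's argument in full, and the extra step you are worried about is one the paper never actually supplies; if you want to be more rigorous than the source you may pursue it, but it is not needed to reproduce the paper's proof.
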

\begin{proof}
According to Lemma \ref{lemma-1}, we have:
\begin{equation}
\begin{aligned}
V(\mathbf{g}^{(k+1)};\lambda)&\leqslant V(\mathbf{g}^{(k)};\lambda) -  \mathop\mathbb{E}\limits_{\omega_{j}\sim U,~\forall j\in[N]}\left[\rho~\nabla G_{V}((g_{1}(\omega_{1}),g_{2}(\omega_{2}),\cdots,g_{N}(\omega_{N}))\cdot\delta V(\mathbf{g}^{(k)};\lambda)\right]\\
&+ \mathop\mathbb{E}\limits_{\omega_{j}\sim U,~\forall j\in[N]}\frac{L_{G}}{2}\rho^{2}\|\delta V(\mathbf{g}^{(k)};\lambda)\|^{2}\\
&=V(\mathbf{g}^{(k)};\lambda)-(\rho-\frac{L_{G}}{2}\rho^2)~\|\delta V(\mathbf{g}^{(k)};\lambda)\|^{2}\\
&= V(\mathbf{g}^{(k)};\lambda)-\left(\frac{2\rho L_{G}-(\rho L_{G})^{2}}{2L_{G}}\right)~\|\delta V(\mathbf{g}^{(k)};\lambda)\|^{2}
\end{aligned}
\end{equation}
Let $k = 0,1,\cdots,K$, and add them up, we have:
\[V(\mathbf{g}^{(K+1)};\lambda)\leqslant V(\mathbf{g}^{(0)};\lambda)-\left(\frac{2\rho L_{G}-(\rho L_{G})^{2}}{2L_{G}}\right)~\sum_{k=0}^{K}\|\delta V(\mathbf{g}^{(k)};\lambda)\|^{2}\]
Since $\lambda\leqslant\frac{1}{L_{f}}$, we know that $V(\mathbf{g}^{(K+1)};\lambda)\geqslant 0$ by Theorem \ref{thm-3}, we have
\begin{equation}
\begin{aligned}
&\sum_{k=0}^{K}\|\delta V(\mathbf{g}^{(k)};\lambda)\|^{2}\leqslant\left(\frac{2L_{G}}{2\rho L_{G}-(\rho L_{G})^{2}}\right)V(\mathbf{g}^{(0)};\lambda)\\
\Rightarrow& \min_{k\in[K]}\|\delta V(\mathbf{g}^{(k)};\lambda)\|^{2}\leqslant \left(\frac{2L_{G}}{2\rho L_{G}-(\rho L_{G})^{2}}\right)\frac{V(\mathbf{g}^{(0)};\lambda)}{K+1}
\end{aligned}
\end{equation}
which completes our proof.
\end{proof}

\section{Experiments}
To evaluate the practical performance of our approach, we apply it to three types of games, two-player quadratic games, general blotto games, and GAMUT games, the most popular games for evaluation of Nash equilibrium algorithms. 
In all the experiments, we set the local radius $\lambda = 1e-3$ and we use gradient descent as our optimization method with step size $\rho = 1e-2$ and momentum $\kappa = 0.9$. 
The network architecture we use for the pushforward functions $g_{\theta}$ is a 6-layer fully connected neural network with the size of each layer as: 20, 40, 160, 160, 40, 20. 
The size of its output layer is the dimension of each player's action space. 
From forward to backward, the activation function we use is: $\tanh, \tanh, \tanh$, ReLU, $\tanh, \tanh$. 


We mainly compare our approach with two recent studies, gradient descent for GNI function \citep{raghunathan2019game} (gradGNI in short), and Symplectic Gradient Adjustment algorithm \citep{balduzzi2018mechanics} (SGA in short), as they outperformed other existing algorithms applicable to continuous game settings.
For all these methods, we either follow the standard hyper-parameters mentioned in the original papers, or the ones resulting in the best convergence.




\subsection{Two-player Quadratic Game}
The two-player quadratic game is defined by the the players' payoff functions $f_i$ ($i=1,2$):
\begin{equation}\label{eqn:quadratic}
    f_i(\x) = \x^T Q_i\x + r_i^T \x,
\end{equation}
 where $Q_i\in \mathbb{R}^{(n_1+n_2)\times (n_1+n_2)}$, $r_i\in\mathbb{R}^{n_1+n_2}$, $\x=(x_1,x_2)$ and $x_i\in \mathbb{R}^{n_i}$.
In our experiments, we choose $n_1 = n_2 \in \{3, 5, 10\}$. 
For each pair of $n_i$, we randomly generate 100 instances for the matrix $Q_i$ and $r_i$ for $i=1,2$.
Each item in each matrix $Q_i$ and each vector $r_i$ follows the uniform distribution on $[0,1]$ independently.

We show the converging process of all algorithms for one game instance ($n_1=n_2=3$) in Fig. \ref{fig:quadratic} as an example.
As we can see, our approach effectively converges to a stationary Nash equilibrium point. 
While the gradGNI approach also converges in this instance, its result has a larger local regret.
In other words, it obtains a worse approximation to Nash equilibrium, which coincides with the essential difference between pure strategy and mixed strategy.
The MC-GNI approach searches for the equilibrium in the mixed strategy space, which includes the pure strategy space that the gradGNI approaches searches in.
On the other hand, the SGA approach diverges in this game instance.
We further take the average of the final local regret after 2000 iterations for all the 100 instances, summarized in Tab. \ref{tab:result}.
All the algorithms show consistency as the dimension of action space increases, and MC-GNI outperforms others regardless of the randomness of game structures.


\subsection{General Blotto Game}
We next consider the general blotto game, which differs from previous games in the action space of each player
for which further constraints apply.

In a blotto game, player $1$ and $2$ (sometimes known as two colonels) have a budget of resource $X_1$, $X_2$ respectively. 
W.l.o.g we set $X_1 \leq X_2$.
There are $m$ battlefields in total.
In each battlefield $j$, when two players allocate $x_{1j},x_{2j}$ resource on it, the payoff of player $i$ is:
\begin{equation}\label{eqn:blotto_payoff}
U_{ij} = f(x_{ij}-x_{-ij}), \mbox{ where } f(\chi)=\tanh{(\chi)},
\end{equation}
where $-i$ denotes the player other than player $i$.
Each player's payoff across all $m$ battlefields is the sum of the payoffs across the individual battlefields. 
For each player $i$, a feasible pure strategy $x_i=(x_{i1},\dots,x_{im}) \in \mathbb{R}_{+}^{m}$ must also satisfies $\sum_{j=1}^{m} x_{ij} \leq X_{i}$.
Here we adopt the generalized blotto game proposed by \citep{golman2009general} with continuous payoff functions. 
The payoff functions in vanilla blotto game \citep{gross1950continuous} is discontinuous, for which our method as well as baselines fails.
In our experiments, we set $m \in \{3, 5, 10\}$. 
For each $m$, we randomly generate 100 instance for the budget $X_i$, following the uniform distribution on $[0,1]$ independently.

We show the converging process of all algorithms for one game instance ($m=3$) in Fig. \ref{fig:blotto} as an example.
All the algorithms converges for this game, and both the gradGNI and SGA approaches converges faster and more smoothly comparing with our MC-GNI.
However, similar to the quadratic game, their final results have larger local regret. This coincides with the fact that the mixed strategy is a better solution concept than the pure strategy, especially in blotto games.
We further take the average of the final local regret after 2000 iterations for all the 100 instances, summarized in Tab. \ref{tab:result}.
All the algorithms show consistency as the dimension of action space increases, and MC-GNI outperforms others regardless of the randomness of game structures.

\subsection{GAMUT Games}
Finally, we apply our method on the game instance generated by the comprehensive GAMUT suite of game generators designated for testing game-theoretic algorithms \cite{nudelman2004run}.
GAMUT includes a group of random distributions, based on each of which the payoff of each player for each pure strategy profile can be drown independently.
In precise, we extend the quadratic game to a multi-player version, where $r_i=0$, and 100 game instances with 4 players are generated.
For each instance, one of the distributions from the GAMUT set is selected, and each item in each matrix $Q_i$ is sampled according to it independently.

We show the converging process of all algorithms for one game instance in Fig. \ref{fig:gamut}. Both MC-GNI and SGA converge, but SGA has a much worse final result than our MC-GNI. And this time, gradGNI diverges. Furthermore, we take the average of the final local regert after 2000 iterations for all the 100 instances, shown in Table \ref{tab:result}.

From 
these different games, we know that our MC-GNI converges and performs better than two baselines in all of the three games, which shows the effectiveness and efficiency of our MC-GNI model. As the first algorithm to compute the mixed strategy Nash equilibrium of games with continuous action space, 
we believe that the technique we introduced here will enable new optimization researches of many exciting interaction domains of algorithmic game theory and deep learning.
\begin{figure}[t]
\centering
\subfigure[$n_i = 3$, 2-player quadratic]{
\label{fig:quadratic}
\begin{minipage}[t]{0.32\linewidth}
\centering
\includegraphics[height=1.35in]{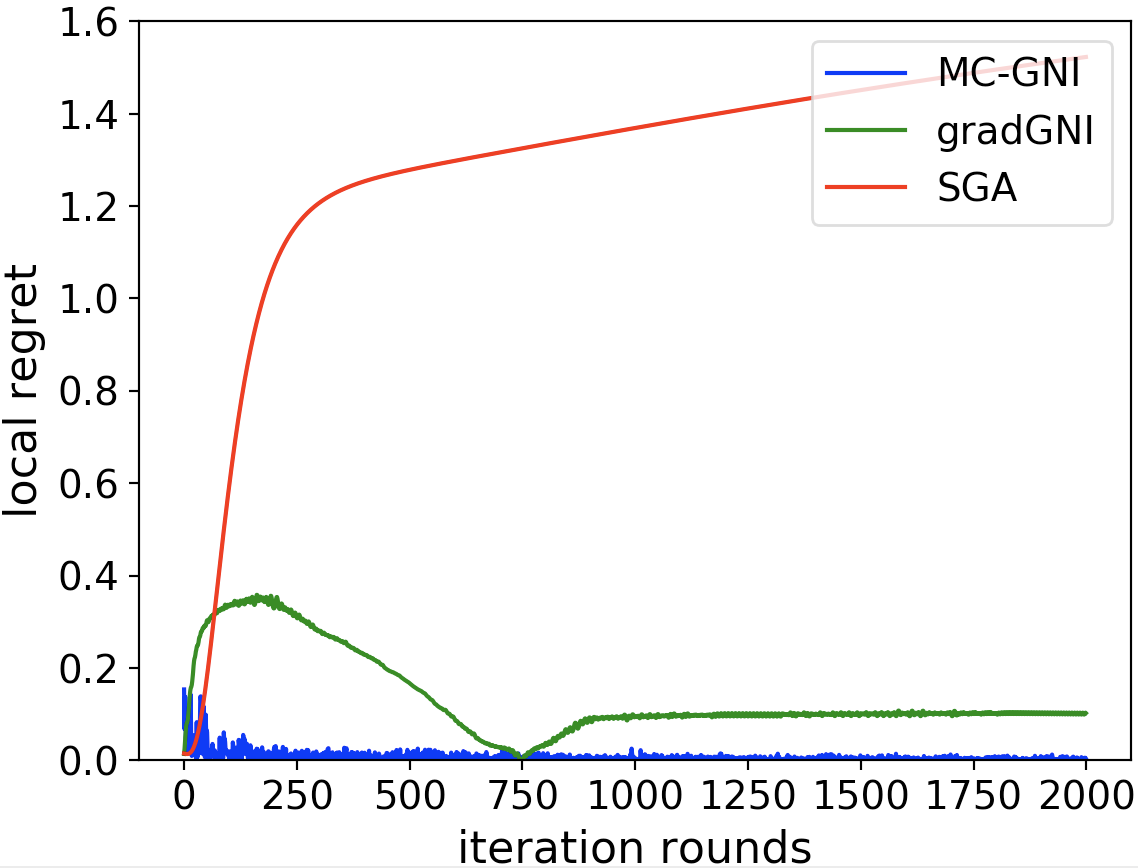}
\end{minipage}%
}%
\subfigure[$m = 3$, 2-player blotto]{
\label{fig:blotto}
\begin{minipage}[t]{0.32\linewidth}
\centering
\includegraphics[height=1.35in]{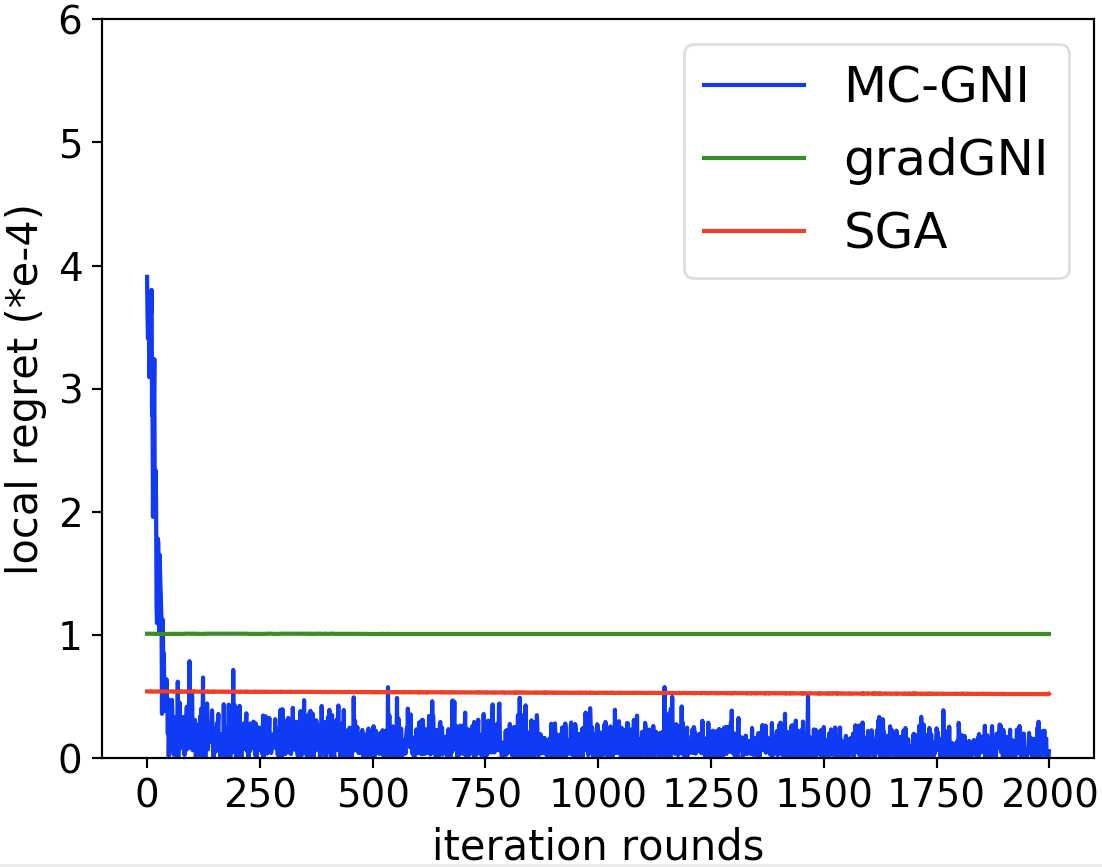}
\end{minipage}%
}%
\subfigure[$n_i = 3$, 4-player gamut]{
\label{fig:gamut}
\begin{minipage}[t]{0.32\linewidth}
\centering
\includegraphics[height=1.35in]{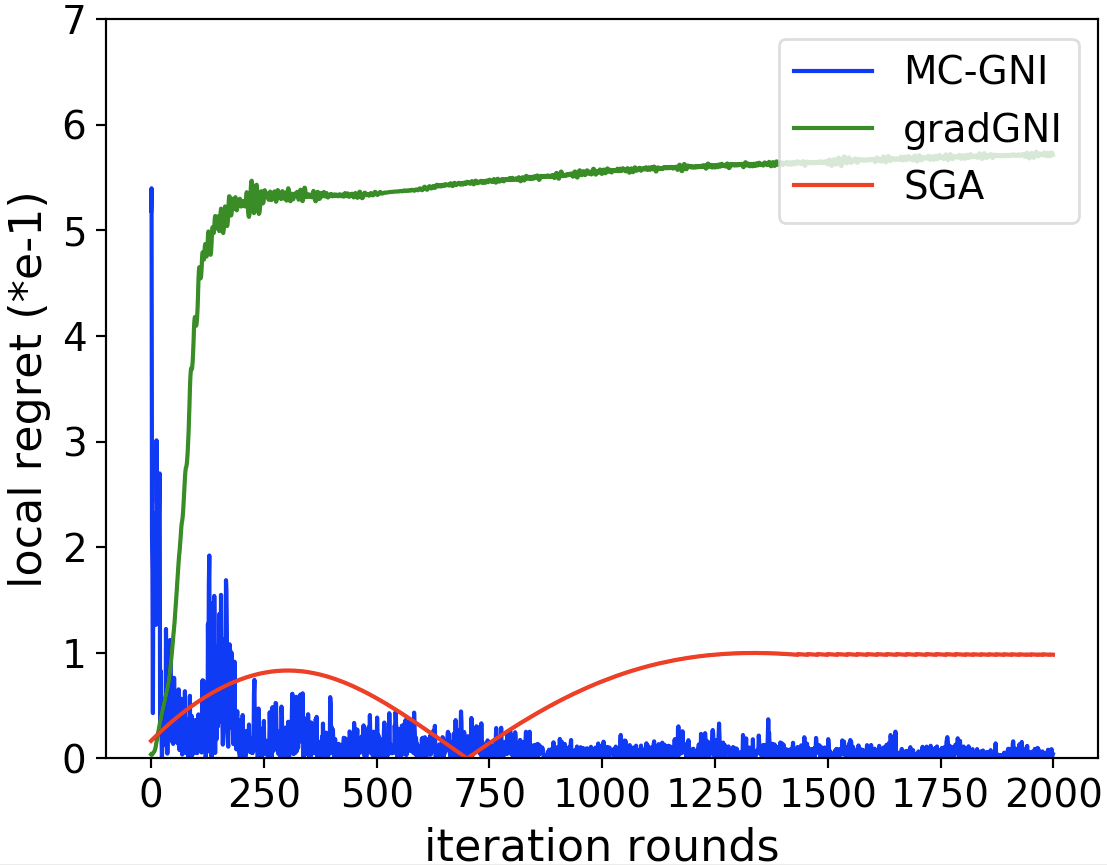}
\end{minipage}
}%
\caption{Local Regret of Various Games.}
\end{figure}

\begin{table}[t]
\centering
\begin{tabular}{c|c|c|c}

~& MC-GNI (our model) & gradGNI & SGA \\
\hline
\hline
Quadratic ($n_i=3$)& \textbf{$\mathbf{(1.63\pm1.20)}$e-3} & ($1.01\pm0.03$)e-1 & $2.59\pm0.17$ \\
Quadratic ($n_i=5$)& \textbf{$(\mathbf{2.84\pm1.95})$e-3} & ($2.95\pm0.19$)e-1 & $3.92\pm 0.22$ \\
Quadratic ($n_i=10$)& \textbf{$\mathbf{(3.76\pm 3.02)}$e-3} & ($1.47\pm0.08$)e-1 & $2.54\pm0.09$ \\
\hline
Blotto ($m=3$)& \textbf{($\mathbf{6.32\pm4.97}$)e-6} & ($2.62\pm0.38$)e-5 & ($5.26\pm0.91$)e-5 \\
Blotto ($m=5$)& \textbf{($\mathbf{4.52\pm3.09}$)e-6} & ($1.10\pm0.06$)e-5 & ($1.21\pm0.18$)e-5 \\
Blotto ($m=10$)& \textbf{($\mathbf{3.62\pm2.39}$)e-6} & ($7.60\pm 0.49$)e-6 & ($5.94\pm0.26$)e-6 \\
\hline
GAMUT ($n_i=3$)& \textbf{($\mathbf{4.95\pm0.42}$)e-3} & ($4.80\pm0.81$)e-1 & ($0.94\pm0.13$)e-1 \\
GAMUT ($n_i=5$)& \textbf{($\mathbf{8.90\pm0.79}$)e-3} & ($1.52\pm0.27$)e-1 & ($2.59\pm0.60$)e-1 \\
GAMUT ($n_i=10$)& \textbf{($\mathbf{1.54\pm0.86}$)e-2} & ($1.84\pm0.48$)e-1 & ($1.76\pm0.32$)e-1 \\
\hline
\hline
\end{tabular}
\caption{Comparison results.}
\label{tab:result}
\end{table}

\newpage
\bibliography{iclr2020_conference}

\begin{thebibliography}{16}
\providecommand{\natexlab}[1]{#1}
\providecommand{\url}[1]{\texttt{#1}}
\expandafter\ifx\csname urlstyle\endcsname\relax
  \providecommand{\doi}[1]{doi: #1}\else
  \providecommand{\doi}{doi: \begingroup \urlstyle{rm}\Url}\fi

\bibitem[Balduzzi et~al.(2018)Balduzzi, Racaniere, Martens, Foerster, Tuyls,
  and Graepel]{balduzzi2018mechanics}
David Balduzzi, Sebastien Racaniere, James Martens, Jakob Foerster, Karl Tuyls,
  and Thore Graepel.
\newblock The mechanics of n-player differentiable games.
\newblock \emph{arXiv preprint arXiv:1802.05642}, 2018.

\bibitem[Bogachev(2007)]{bogachev2007measure}
Vladimir~I Bogachev.
\newblock \emph{Measure theory}, volume~1.
\newblock Springer Science \& Business Media, 2007.

\bibitem[Daskalakis et~al.(2017)Daskalakis, Ilyas, Syrgkanis, and
  Zeng]{daskalakis2017training}
Constantinos Daskalakis, Andrew Ilyas, Vasilis Syrgkanis, and Haoyang Zeng.
\newblock Training gans with optimism.
\newblock \emph{arXiv preprint arXiv:1711.00141}, 2017.

\bibitem[Fearnley et~al.(2015)Fearnley, Igwe, and
  Savani]{fearnley2015empirical}
John Fearnley, Tobenna~Peter Igwe, and Rahul Savani.
\newblock An empirical study of finding approximate equilibria in bimatrix
  games.
\newblock In \emph{International Symposium on Experimental Algorithms}, pp.\
  339--351. Springer, 2015.

\bibitem[Gidel et~al.(2018)Gidel, Berard, Vignoud, Vincent, and
  Lacoste-Julien]{gidel2018variational}
Gauthier Gidel, Hugo Berard, Ga{\"e}tan Vignoud, Pascal Vincent, and Simon
  Lacoste-Julien.
\newblock A variational inequality perspective on generative adversarial
  networks.
\newblock \emph{arXiv preprint arXiv:1802.10551}, 2018.

\bibitem[Golman \& Page(2009)Golman and Page]{golman2009general}
Russell Golman and Scott~E Page.
\newblock General blotto: games of allocative strategic mismatch.
\newblock \emph{Public Choice}, 138\penalty0 (3-4):\penalty0 279--299, 2009.

\bibitem[Goodfellow et~al.(2014)Goodfellow, Pouget-Abadie, Mirza, Xu,
  Warde-Farley, Ozair, Courville, and Bengio]{goodfellow2014generative}
Ian Goodfellow, Jean Pouget-Abadie, Mehdi Mirza, Bing Xu, David Warde-Farley,
  Sherjil Ozair, Aaron Courville, and Yoshua Bengio.
\newblock Generative adversarial nets.
\newblock In \emph{Advances in neural information processing systems}, pp.\
  2672--2680, 2014.

\bibitem[Gross \& Wagner(1950)Gross and Wagner]{gross1950continuous}
Oliver Gross and Robert Wagner.
\newblock A continuous colonel blotto game.
\newblock Technical report, RAND PROJECT AIR FORCE SANTA MONICA CA, 1950.

\bibitem[Lemke \& Howson(1964)Lemke and Howson]{lemke1964equilibrium}
Carlton~E Lemke and Joseph~T Howson, Jr.
\newblock Equilibrium points of bimatrix games.
\newblock \emph{Journal of the Society for industrial and Applied Mathematics},
  12\penalty0 (2):\penalty0 413--423, 1964.

\bibitem[Nash(1950)]{nash1950equilibrium}
John~F Nash.
\newblock Equilibrium points in n-person games.
\newblock \emph{Proceedings of the national academy of sciences}, 36\penalty0
  (1):\penalty0 48--49, 1950.

\bibitem[Nikaid{\^o} et~al.(1955)Nikaid{\^o}, Isoda, et~al.]{nikaido1955note}
Hukukane Nikaid{\^o}, Kazuo Isoda, et~al.
\newblock Note on non-cooperative convex games.
\newblock \emph{Pacific Journal of Mathematics}, 5\penalty0 (Suppl.
  1):\penalty0 807--815, 1955.

\bibitem[Nudelman et~al.(2004)Nudelman, Wortman, Shoham, and
  Leyton-Brown]{nudelman2004run}
Eugene Nudelman, Jennifer Wortman, Yoav Shoham, and Kevin Leyton-Brown.
\newblock Run the gamut: A comprehensive approach to evaluating game-theoretic
  algorithms.
\newblock In \emph{Proceedings of the Third International Joint Conference on
  Autonomous Agents and Multiagent Systems-Volume 2}, pp.\  880--887. IEEE
  Computer Society, 2004.

\bibitem[R(1996)]{varian1996intermediate}
Varian~Hal R.
\newblock Intermediate microeconomics: a modern approach, 1996.

\bibitem[Raghunathan et~al.(2019)Raghunathan, Cherian, and
  Jha]{raghunathan2019game}
Arvind~U Raghunathan, Anoop Cherian, and Devesh~K Jha.
\newblock Game theoretic optimization via gradient-based nikaido-isoda
  function.
\newblock \emph{arXiv preprint arXiv:1905.05927}, 2019.

\bibitem[Tsaknakis \& Spirakis(2007)Tsaknakis and
  Spirakis]{tsaknakis2007optimization}
Haralampos Tsaknakis and Paul~G Spirakis.
\newblock An optimization approach for approximate nash equilibria.
\newblock In \emph{International Workshop on Web and Internet Economics}, pp.\
  42--56. Springer, 2007.

\bibitem[Uryas'~ev \& Rubinstein(1994)Uryas'~ev and
  Rubinstein]{uryas1994relaxation}
Stanislav Uryas'~ev and Reuven~Y Rubinstein.
\newblock On relaxation algorithms in computation of noncooperative equilibria.
\newblock \emph{IEEE Transactions on Automatic Control}, 39\penalty0
  (6):\penalty0 1263--1267, 1994.

\end{thebibliography}
\bibliographystyle{iclr2020_conference}
\end{document}